\newtheorem{theorem}{Theorem}
\newtheorem{proposition}{Proposition}
\begin{document}

\title{Fairness for Non-Orthogonal Multiple Access \\ in 5G Systems}

\vspace{-0.6cm}
\author{Stelios Timotheou,~\IEEEmembership{Member,~IEEE}, and Ioannis Krikidis,~\IEEEmembership{Senior Member,~IEEE}
\vspace{-0.4cm}
\thanks{S. Timotheou and I. Krikidis are with the KIOS Research Center for Intelligent Systems and Networks, University of Cyprus, Nicosia 1678 (E-mail: \{timotheou.stelios, krikidis\}@ucy.ac.cy)}
\thanks{Part of this work was supported by the Research Promotion Foundation, Cyprus under the project KOYLTOYRA/BP-NE/0613/04 ``Full-Duplex Radio: Modeling, Analysis and Design (FD-RD)''.}}

\maketitle

\begin{abstract}
In non-orthogonal multiple access (NOMA) downlink, multiple data flows are superimposed in the power domain and user decoding is based on successive interference cancellation. NOMA's performance highly depends on the power split among the data flows and the associated power allocation (PA) problem. In this letter, we study NOMA from a fairness standpoint and we investigate PA techniques that ensure fairness for the downlink users under i) instantaneous channel state information (CSI) at the transmitter, and ii) average CSI.  Although the formulated problems are non-convex, we have developed low-complexity polynomial algorithms that yield the optimal solution in both cases considered. 
\end{abstract}

\vspace{-0.1cm}
\begin{keywords}
5G, NOMA, fairness, outage probability, convex optimization.
\end{keywords}

%
%
%
%

\vspace{-0.3cm}
\section{Introduction}

\IEEEPARstart{F}{uture} 5G communication systems need to support unprecedented requirements for the wireless access connection, targeting cell throughput capacities of $1000\times$  current 4G technology and roundtrip latency of about $1$ msec. Towards this direction, three major 5G technologies named ultra-densification, millimeter wave, and massive multiple-input multiple-output, have attracted considerable  attention in both industry and academia \cite{AND}. In addition to these technological implications, physical layer issues such as transmission waveforms and multiple-access (MA) schemes should be reconsidered.  A promising downlink MA scheme is the non-orthogonal multiple access (NOMA) which achieves high spectral efficiencies by combining superposition coding at the transmitter with successive interference cancellation (SIC) at the receivers \cite{SAI, KIM}. 

In contrast to the conventional orthogonal MA schemes (e.g. time-division multiple access (TDMA), etc.), NOMA simultaneously serves multiple users in the same degrees of freedom by splitting them in the power domain. Since NOMA is based on the SIC order, the served users achieve unequal rates and this could be critical for scenarios with strict fairness constraints. The work in \cite{DIN} analyzes the performance of the NOMA scheme in terms of outage probability and achievable sum-rate for a fixed power allocation without discussing potential fairness issues. In order to enable fairness, \cite{DIN2} introduces a cooperative phase after the NOMA downlink, where strong users relay data for the weak users but with the cost of extra channel resources (i.e., dedicated time slots are used for cooperation). On the other hand, fairness can be supported through appropriate power allocation (PA) of the  superimposed transmitted data flows.  

The main novelty of this work is the investigation of the impact of PA on the fairness performance of the NOMA scheme. In particular, we study  the PA problem in NOMA from a fairness standpoint under two main system assumptions: i) when  user's data rates are adopted to the channel conditions (perfect channel state information (CSI)), and ii) when users have fixed targeted data rates under an average CSI. Although the resulting optimization problems are both non-convex, we develop low-complexity bisection-based iterative algorithms that provably yield globally optimal solutions. We further illustrate that each iteration subproblem can be optimally solved in closed and semi-closed form for the two cases, with no specialized optimization software. The results show that the NOMA scheme outperforms conventional MA approaches by significantly improving the performance of the worst user. 

\vspace{-0.1cm}
\section{System Model}
\label{sec:model}

We assume a single-cell downlink topology consisting of one base station, $B$, and $N$ users, $U_i$, with $i\in\mathcal{N}=\{1,\ldots,N\}$; all terminals are equipped with a single antenna. $B$ has always data to transmit for each user (saturated scenario) and its total available transmitted power is equal to $P$. All wireless links exhibit independent and identically distributed (i.i.d.) block Rayleigh fading and additive white Gaussian noise (AWGN). This means that the fading coefficients $h_i$ (for the $B\rightarrow U_i$ link) remain constant during one slot, but change independently from one slot to another according to a complex Gaussian distribution with zero mean and variance $\sigma_h^2$; the variance captures path-loss and shadowing effects. Without loss of generality,  the channels are sorted as $0<|h_1|^2 \leq |h_2|^2\leq \hdots |h_N|^2$ i.e., the $i$-th user always holds the $i$-th  weakest instantaneous channel. The AWGN is assumed to be normalized with zero mean and variance $\sigma_n^2$.

\vspace{-0.3cm}
\subsection{NOMA scheme}

The NOMA scheme allows $B$ to simultaneously serve all users by using the entire bandwidth to transmit data via a superposition coding technique at the transmitter side and SIC techniques at the users \cite[Eq. (6.25)]{tse}; in this case, user multiplexing is performed in the power domain.  According to the NOMA principles, $B$ transmits a linear superposition of $N$ data flows by allocating a fraction $\beta_i$ of the total power to the $i$-th data flow. Each receiver employs a SIC technique and is able to perfectly decode the signals of the weakest users i.e., the $i$-th user can decode the signal for the $m$-th user with $m\leq i$,  removing completely inter-user interference. In this way, the $i$-th user can remove interference from the weakest users and its achievable rate associated with the decoding of the $m$-th data flow is given by \cite{DIN}
\begin{align}
R_{i,m}^{\text{NOMA}}(\boldsymbol \beta)=\log\left(1+\frac{\beta_m P |h_i|^2}{P|h_i|^2\sum_{k=m+1}^N\beta_k+\sigma_n^2} \right),
\end{align} 
where $R_{m,m}\leq R_{i,m}$, $m\le i$ due to the ordering of the channel coefficients.

\subsection{Outage probability performance for NOMA}

In case $B$ does not have any instantaneous channel feedback, it transmits with a target spectral efficiency $r_0$ bits per channel use (BPCU) for each data flow and an appropriate performance metric is the outage probability.  An outage event occurs at the $i$-th user when is not able to decode  its own data flow or the data flows of the weakest users $m<i$. By using high order statistics and following similar steps to \cite{DIN}, the outage probability for the $i$-th user is 
\begin{subequations}
\begin{align}
&\mathcal{P}_i^{\textrm{NOMA}}(\boldsymbol \beta)=1-\mathbb{P}\bigg\{\bigcap_{m=1,\ldots,i}R_{i,m}^{\text{NOMA}}(\boldsymbol \beta)\geq r_0  \bigg\} \nonumber \\
&=\mathbb{P}\bigg\{\bigcap_{m=1,\ldots,i} |h_i|^2\geq \zeta_m  \bigg\} = \mathbb{P}\bigg\{|h_i|^2\geq \hat{\zeta}_i \bigg\}  \label{eq:outageDefinition1} \\
&=\Delta_i \int_0^{\hat{\zeta}_i}\bigg(1-\exp(-\lambda x) \bigg)^{i-1}\exp\big(-\lambda (N-i+1)x\big)dx \nonumber \\
&=\! \sum_{k=0}^{i-1}\!\gamma_{i,k}\bigg(\!1\!-\!\exp(-\delta_{i,k}\hat{\zeta}_i)\! \bigg), \label{out2}
\end{align}
\end{subequations}
where $\gamma_{i,k}=\frac{\Delta_i \binom{i-1}{k}(-1)^k}{\delta_{i,k}}$, $\Delta_i = \frac{N!}{(i-1)!(N-i)!}$, $\delta_{i,k}=\lambda(N-i+1+k)$, $\hat{\zeta}_i=\max\{\zeta_1,\zeta_2,\ldots,\zeta_i \}$, $\lambda = 1/\sigma_h^2$, $\zeta_i=\frac{\sigma_n^2\hat{r}_0}{P(\beta_i-\hat{r}_0 \sum_{l=i+1}^N\beta_l)}$, $\hat{r}_0=2^{r_0}-1$, $\beta_i>\hat{r}_0 \sum_{l=i+1}^N \beta_l$ (see Eq. (8) in \cite{DIN}), and \eqref{out2} follows from the binomial theorem. It is worth noting that here, we give a closed form expression for i.i.d. Rayleigh fading, while the analysis in \cite{DIN} refers to a different channel probability distribution. 

\section{Fairness for NOMA systems}
The NOMA scheme enables a more flexible management of the users' achievable rates and is an efficient way to enhance user fairness.  In this section, we consider two fundamental fairness criteria that refer to NOMA systems with instantaneous and average CSI, respectively.

\subsection{Max-Min fairness with instantaneous CSI}\label{sec:perfCSI}
If a continuous channel feedback is available at the transmitter side, users' rates can be allocated according to their instantaneous channel conditions. In this case, a suitable criterion is the max-min fairness that maximizes the minimum achievable user rate and is formulated as follows
\begin{subequations}
\label{eq:maxminNOMA}
\begin{align}
&\max_{\boldsymbol \beta} {\displaystyle \min_{i\in\mathcal{N}}}\; R_{i,i}^{\textrm{NOMA}}(\boldsymbol \beta), \label{eq:maxminNOMAa}\\
&\textrm{s.t.}\;\; \sum_{j=1}^N\beta_j \leq 1,  \label{eq:maxminNOMAb} \\
&0 \leq \beta_j,\;\;\;\textrm{for}\;j\in\mathcal{N}.  \label{eq:maxminNOMAc} 
\end{align}
\end{subequations}
Problem \eqref{eq:maxminNOMA} is not convex and hence hard to solve directly using standard optimization solvers. In this section, we transform the problem into a sequence of linear programs (LPs) and develop a customized low-complexity polynomial algorithm for its optimal solution. Towards this direction, we state proposition \ref{prop:maxminNOMA}. 
\begin{proposition}
\label{prop:maxminNOMA}
Problem \eqref{eq:maxminNOMA} is quasi-concave. 
\end{proposition}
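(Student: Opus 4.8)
The plan is to show that the max-min objective in \eqref{eq:maxminNOMA} is a quasi-concave function of $\boldsymbol\beta$ over the convex feasible set defined by \eqref{eq:maxminNOMAb}--\eqref{eq:maxminNOMAc}. Recall that a problem of maximizing a function $f$ over a convex set is called quasi-concave when $f$ is quasi-concave, i.e. when every superlevel set $\{\boldsymbol\beta : f(\boldsymbol\beta)\geq t\}$ is convex. So the task reduces to analyzing the superlevel sets of $g(\boldsymbol\beta)=\min_{i\in\mathcal{N}} R_{i,i}^{\textrm{NOMA}}(\boldsymbol\beta)$.

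First I would observe that the pointwise minimum of quasi-concave functions is quasi-concave, since the superlevel set of a minimum is the intersection of the individual superlevel sets, and intersections of convex sets are convex. Hence it suffices to prove that each individual rate $R_{i,i}^{\textrm{NOMA}}(\boldsymbol\beta)=\log\!\left(1+\frac{\beta_i P|h_i|^2}{P|h_i|^2\sum_{k=i+1}^N\beta_k+\sigma_n^2}\right)$ is quasi-concave in $\boldsymbol\beta$. Because $\log(1+x)$ is increasing, the superlevel set $\{R_{i,i}^{\textrm{NOMA}}\geq t\}$ is identical to a superlevel set of the inner SINR argument $\frac{\beta_i P|h_i|^2}{P|h_i|^2\sum_{k=i+1}^N\beta_k+\sigma_n^2}$; and since a monotone transformation preserves quasi-concavity, it is enough to treat the SINR term itself.

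The key step is then to show each SINR term is quasi-concave. I would do this by writing the superlevel set condition $\frac{\beta_i P|h_i|^2}{P|h_i|^2\sum_{k=i+1}^N\beta_k+\sigma_n^2}\geq s$ and clearing the (strictly positive) denominator to obtain the equivalent affine inequality $\beta_i P|h_i|^2 \geq s\big(P|h_i|^2\sum_{k=i+1}^N\beta_k+\sigma_n^2\big)$, which is linear in $\boldsymbol\beta$ and therefore defines a halfspace — a convex set. Equivalently, the SINR term is a ratio of an affine numerator to an affine (and positive) denominator, i.e. a linear-fractional function, and linear-fractional functions are well known to be both quasi-concave and quasi-convex. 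Either route establishes convexity of every superlevel set.

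The main obstacle, and the reason the claim needs care rather than being immediate, is that the function is genuinely \emph{not} concave (the letter already notes the overall problem is non-convex), so one must resist trying to prove concavity and instead work entirely through the superlevel-set / linear-fractional characterization. Once each $R_{i,i}^{\textrm{NOMA}}$ is shown quasi-concave by the halfspace argument, taking the minimum over $i\in\mathcal{N}$ intersects finitely many halfspaces, yielding convex superlevel sets for $g$ over the convex feasible region, which is exactly the assertion that \eqref{eq:maxminNOMA} is a quasi-concave program. This structure is precisely what justifies the subsequent bisection-over-$t$ approach, since for each fixed level $t$ the feasibility check becomes the intersection of halfspaces with the constraint set, i.e. a linear program.
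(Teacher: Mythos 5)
Your proof is correct and follows essentially the same route as the paper: both arguments reduce the superlevel set $\{\min_i R_{i,i}^{\textrm{NOMA}}(\boldsymbol\beta)\ge t\}$ to an intersection of the linear inequalities obtained by clearing the (strictly positive) denominators, exactly as in \eqref{eq:biLinear}, and conclude convexity from the intersection of halfspaces together with the linear constraints \eqref{eq:maxminNOMAb}--\eqref{eq:maxminNOMAc}. Your additional framing via the min-of-quasi-concave-functions lemma and the linear-fractional characterization is a slightly more decomposed presentation of the identical idea, so no further comparison is needed.
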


\begin{proof}
A maximization optimization problem is quasi-concave when the objective function is quasi-concave and the constraints are convex. Clearly, the constraints of \eqref{eq:maxminNOMA} are convex since  \eqref{eq:maxminNOMAb} and \eqref{eq:maxminNOMAc} are linear. For the objective function to be quasi-concave, all its sublevel sets must be concave \cite{boyd}, i.e., $\mathcal{S}_t=\{{\displaystyle \min_i} R_{i,i}^{\textrm{NOMA}}(\boldsymbol \beta)\ge t\}$, for $t\in \mathbb{R}$,  which denotes the set of $\boldsymbol \beta$ for which the objective function is larger than $t$. Due to the $\min$ operator, it is true that $\mathcal{S}_t=\{R_{i,i}^{\textrm{NOMA}}(\boldsymbol \beta)\ge t, i\in\mathcal{N}\}$. Nonetheless, set $\mathcal{S}_t$ is concave for $t\in \mathbb{R}$, as constraints $R_{i,i}^{\textrm{NOMA}}(\boldsymbol \beta)\ge t, i\in\mathcal{N}$ can be expressed as
\begin{align}
\beta_i P |h_i|^2\ge (2^t-1)\left(P|h_i|^2\sum_{l=i+1}^N\beta_l+\sigma_n^2\right),~i\in\mathcal{N},\label{eq:biLinear}
\end{align}
which are linear inequalities and hence concave, completing the proof.
\end{proof}

\begin{algorithm}[t]
	\caption{\textbf{: Optimal solution to problem \eqref{eq:maxminNOMA}}} 
	\begin{algorithmic} [1] 
 	\STATE  \textbf{Init.} $t_{LB}=0$,~$t_{UB}=\log\left(1+\frac{P |h_N|^2}{\sigma_n^2}\right)$.
  \WHILE{($t_{UB}- t_{LB}\ge\epsilon$)}
	\STATE Set $t= (t_{UB} + t_{LB})/2$; Solve LP \eqref{eq:seqLP2} to obtain $\boldsymbol\beta^{LP}$. 
	\IF{$\left({\displaystyle \sum_{i\in\mathcal{N}}} \beta_i^{LP} \le 1\right)$}
		\STATE Set $t_{LB}=t$; $\boldsymbol \beta^* = \boldsymbol\beta^{LP}$; $r^*=t$.	
	\ELSE
		\STATE Set $t_{UB}=t$.	
	\ENDIF
	\ENDWHILE
\end{algorithmic}  
\label{alg:bisect}
\end{algorithm}

Let $r^*$ denote the optimal objective function value to quasi-concave problem \eqref{eq:maxminNOMA}. For a specific constant value $t$, if the LP 
\begin{align}
\text{Find}\;\;  \boldsymbol \beta\;\;  \text{subject to constraints \eqref{eq:maxminNOMAb}, \eqref{eq:maxminNOMAc}, \eqref{eq:biLinear}},  \label{eq:seqLP1}
\end{align}
is feasible then $r^*\ge t$, otherwise  $r^*\le t$. Equivalently, one can solve the following LP
 \begin{align}
{\displaystyle \min_{\boldsymbol \beta}} {\displaystyle \sum_{i\in\mathcal{N}}} \beta_i \text{ subject to constraints \eqref{eq:maxminNOMAc} and \eqref{eq:biLinear},}  \label{eq:seqLP2}
\end{align}
and check if the solution satisfies \eqref{eq:maxminNOMAb}. This implies that by appropriately bounding $t$ through a bisection procedure (Algorithm \ref{alg:bisect}), the optimal solution to  \eqref{eq:maxminNOMA}, within a desirable accuracy $\epsilon$, can be obtained by solving a sequence of feasibility LPs of the form \eqref{eq:seqLP2}.

Although \eqref{eq:seqLP2} is LP and can be solved with standard optimization solvers, its solution can be obtained in closed-form with linear computational complexity from a customized algorithm. Towards this direction Proposition \ref{prop:equalityNOMA} is essential.
\begin{proposition}
\label{prop:equalityNOMA}
The optimal solution of \eqref{eq:seqLP2}, satisfies all constraints \eqref{eq:maxminNOMAc} with strict inequality (\emph{inactive constraints}) and \eqref{eq:biLinear} with equality (\emph{active constraints}). 
\end{proposition}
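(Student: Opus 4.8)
The plan is to establish the two claims almost independently, exploiting the recursive, lower-triangular structure of the feasible set. First I would rewrite each inequality in \eqref{eq:biLinear} in the more transparent form
\begin{align}
\beta_i \ge (2^t-1)\sum_{l=i+1}^N\beta_l + (2^t-1)\frac{\sigma_n^2}{P|h_i|^2},\quad i\in\mathcal{N}, \label{eq:biLinRW}
\end{align}
obtained by dividing \eqref{eq:biLinear} through by the strictly positive quantity $P|h_i|^2$. Writing $\phi=2^t-1>0$ for the tested rate $t>0$, this exposes the key structural feature: the lower bound on $\beta_i$ depends only on the ``heavier'' variables $\beta_{i+1},\ldots,\beta_N$, plus a strictly positive additive constant $\phi\,\sigma_n^2/(P|h_i|^2)$.

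For the inactiveness of the non-negativity constraints \eqref{eq:maxminNOMAc}, I would show that every feasible point --- not only the optimum --- satisfies $\beta_i>0$ for all $i$, by backward induction on $i$. The base case $i=N$ is immediate from \eqref{eq:biLinRW}, since the sum is empty and hence $\beta_N\ge\phi\,\sigma_n^2/(P|h_N|^2)>0$. For the inductive step, if $\beta_{i+1},\ldots,\beta_N>0$, the right-hand side of \eqref{eq:biLinRW} is the sum of a non-negative term and a strictly positive constant, forcing $\beta_i>0$. Thus \eqref{eq:maxminNOMAc} holds strictly at every feasible point, in particular at the optimum.

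For the activeness of \eqref{eq:biLinear}, I would use a single-coordinate perturbation and argue by contradiction. Suppose that at an optimal $\boldsymbol\beta$ the $i$-th inequality in \eqref{eq:biLinRW} is strict. Since $\beta_i>0$ by the previous step, I can replace $\beta_i$ by $\beta_i-\varepsilon$ for a small enough $\varepsilon>0$ while keeping feasibility: the $i$-th constraint still holds thanks to its slack; non-negativity still holds because $\beta_i>0$; the constraints for $j>i$ are untouched, as $\beta_i$ does not appear in them; and the constraints for $j<i$ are only relaxed, since $\beta_i$ enters their right-hand sides with the positive coefficient $\phi$, so lowering it lowers those bounds. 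The perturbed point strictly decreases the objective $\sum_{i\in\mathcal{N}}\beta_i$, contradicting optimality; hence every constraint in \eqref{eq:biLinear} is active.

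The main obstacle is not algebraic but structural: I must carefully verify the monotone dependence pattern that makes the one-variable decrease safe, namely that $\beta_i$ occurs only as the bounded quantity in its own constraint and on the right-hand side of the strictly earlier constraints, so that decreasing it can never tighten any other user's constraint. Once this ordering property is secured, both claims follow, and together they certify that the active equalities fully determine $\boldsymbol\beta$, so \eqref{eq:seqLP2} can be solved by a closed-form recursion (computing $\beta_N,\beta_{N-1},\ldots,\beta_1$ in turn) rather than by a generic LP solver.
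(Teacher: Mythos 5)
Your proof is correct, but it follows a genuinely different route from the paper. The paper establishes both claims through the KKT conditions of the LP: since the right-hand side of each constraint \eqref{eq:biLinear} is strictly positive, primal feasibility forces $\beta_i>0$, so complementary slackness gives $\mu_i=0$; the stationarity equation then forces the multipliers $\lambda_i$ of \eqref{eq:biLinear} to be strictly positive, and complementary slackness again forces those constraints to be active. You instead argue purely at the primal level: positivity of every $\beta_i$ follows for \emph{every} feasible point (indeed, your backward induction is not even needed --- non-negativity of the $\beta_l$ already makes the right-hand side of your rewritten constraint strictly positive), and activeness follows from a single-coordinate perturbation that exploits the triangular dependence structure: $\beta_i$ appears only in its own constraint and, with positive coefficient, on the right-hand side of constraints $j<i$, so a small decrease preserves feasibility and strictly reduces the objective, contradicting optimality. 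Your argument is more elementary --- it needs no duality theory and makes the structural reason for the result explicit --- and it proves the slightly stronger fact that positivity holds at all feasible points, not just the optimum. The paper's KKT route is more systematic: it simultaneously produces the dual certificate of optimality and plugs directly into the standard convex-optimization toolbox, which is convenient given that the same machinery (necessity and sufficiency of KKT for the LP) is invoked to justify Theorem \ref{th:1}. Both proofs are complete and lead to the same closed-form recursion \eqref{eq:optimalNOMA}.
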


\begin{proof}
Because the problem is convex, the following Karush-Kuhn-Tucker (KKT) conditions are necessary and sufficient for optimality of \eqref{eq:seqLP2}, for $i\in\mathcal{N}$:
\begin{align}
&\lambda_i P |h_i|^2 + \mu_i = \sum_{k<i}\lambda_k (2^t-1)P|h_k|^2 + 1,\label{eq:KKTNOMA1}\\
&\beta_i P |h_i|^2 \ge (2^t-1)\left(P|h_i|^2\sum_{l=i+1}^N\beta_l+\sigma_n^2\right),\label{eq:KKTNOMA2}\\
&\beta_i\ge0,~\lambda_i\ge0, ~\mu_i\ge0,\label{eq:KKTNOMA3}\\
&\lambda_i\left((2^t-1)\left(P|h_i|^2\sum_{l=i+1}^N\beta_l+\sigma_n^2\right) - \beta_i P |h_i|^2\right)=0,\label{eq:KKTNOMA4}\\
&\mu_i\beta_i=0,\label{eq:KKTNOMA5}
\end{align}
where $\lambda_i$ and $\mu_i$, $i\in\mathcal{N}$, are the Lagrange multipliers for constraints \eqref{eq:biLinear} and \eqref{eq:maxminNOMAc} respectively. 
The right hand side (r.h.s.) of \eqref{eq:KKTNOMA2} is strictly positive for all $i\in\mathcal{N}$, as $\sigma_n^2>0$, $t>0$ and $\beta_i\ge0$; hence, the left hand side (l.h.s.) has to be strictly positive which implies that $\beta_i>0$ and $\mu_i=0$ (due to \eqref{eq:KKTNOMA5}), which completes the first part of the proposition. In a similar fashion the r.h.s. of \eqref{eq:KKTNOMA1} is strictly positive, which implies that $\lambda_i>0$, $i\in\mathcal{N}$, as $\mu_i=0$. Since $\lambda_i>0$, condition \eqref{eq:KKTNOMA4} implies that all constraints \eqref{eq:KKTNOMA2} must be enforced with equality which completes the proof. 
\end{proof}
The intuition behind proposition \ref{prop:equalityNOMA} is that all $\beta_i$, $i\in\mathcal{N}$ must be positive for the problem constraints to be satisfied, while all  $R_{i,i}$ must be equal at the optimal solution. Based on proposition \ref{prop:equalityNOMA}, the optimal solution to LP \eqref{eq:seqLP2} can be obtained from Theorem \ref{th:1}.
\begin{theorem}
\label{th:1}
The optimal solution to \eqref{eq:seqLP2} is given by:
\begin{align}
&\beta_i = \frac{2^t-1}{P |h_i|^2 }\left(P|h_i|^2\sum_{l=i+1}^N\beta_l+\sigma_n^2\right), i=N,\ldots,1.\label{eq:optimalNOMA}
\end{align}
\end{theorem}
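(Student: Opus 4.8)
The plan is to build directly on Proposition \ref{prop:equalityNOMA}, which has already determined the qualitative structure of the optimum, and then reduce the LP \eqref{eq:seqLP2} to a triangular system of linear equations that can be solved explicitly by back-substitution.

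First I would invoke Proposition \ref{prop:equalityNOMA}: at any optimal solution of \eqref{eq:seqLP2}, every constraint \eqref{eq:biLinear} is active, so the optimal $\boldsymbol\beta$ satisfies the $N$ equalities
\begin{align}
\beta_i P |h_i|^2 = (2^t-1)\left(P|h_i|^2\sum_{l=i+1}^N\beta_l+\sigma_n^2\right),~i\in\mathcal{N}. \nonumber
\end{align}
Dividing through by $P|h_i|^2>0$ immediately reproduces the claimed expression \eqref{eq:optimalNOMA} for each $i$, so the content of the theorem is really the assertion that these active-constraint equations admit a unique, explicit solution.

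The key observation is that this system is triangular: the right-hand side for index $i$ depends only on the variables $\beta_{i+1},\ldots,\beta_N$ with strictly larger indices. I would therefore solve it by back-substitution, starting from $i=N$, where the sum $\sum_{l=N+1}^N\beta_l$ is empty and the equation gives the explicit base value $\beta_N=(2^t-1)\sigma_n^2/(P|h_N|^2)$. Descending through $i=N-1,\ldots,1$, each $\beta_i$ is then determined uniquely in terms of the previously computed $\beta_{i+1},\ldots,\beta_N$, which is exactly the recursion in \eqref{eq:optimalNOMA} and shows it pins down a single point.

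Finally I would verify feasibility, which certifies optimality. Since $2^t-1>0$ and $P,|h_i|^2,\sigma_n^2>0$, a short downward induction gives $\beta_i>0$ for all $i$, so \eqref{eq:maxminNOMAc} holds and every inner sum is nonnegative, while the equalities guarantee \eqref{eq:biLinear}. The main — though modest — obstacle is recognizing and justifying the triangular structure: it is what lets back-substitution terminate cleanly at the empty sum for $i=N$ and what forces uniqueness of the solution. Beyond that, the derivation is a direct rearrangement of the active-constraint equations handed over by Proposition \ref{prop:equalityNOMA}.
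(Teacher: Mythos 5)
Your proposal is correct and follows essentially the same route as the paper: invoke Proposition~\ref{prop:equalityNOMA} to turn \eqref{eq:biLinear} into equalities, exploit the triangular dependence of $\beta_i$ on $\beta_{i+1},\ldots,\beta_N$, and back-substitute from $i=N$ down to $i=1$. Your added verification of positivity/feasibility and uniqueness is a slightly more complete write-up of what the paper leaves implicit, but it is not a different argument.
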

 
\begin{proof} 
The proof is a direct consequence of the active inequality constraints \eqref{eq:biLinear} from Proposition \ref{prop:equalityNOMA}. The analytical solution emanates from the fact that normalized power $\beta_i$ only depends on the power allocated to the stronger channels $i+1,\ldots,N$. Hence, the optimal objective function value can be computed by allocating power from the strongest to the weakest channel in succession. 
\end{proof} 

Notice that due to the special structure of the problem, the optimal solution can be obtained in closed form (see \eqref{eq:optimalNOMA}), with all users having data rate equal to $t$. In addition, it can be easily observed that the computational complexity of solving \eqref{eq:seqLP2} is $O(N)$, i.e., linear to the number of users. 

\vspace{-0.2cm}
\subsection{Min-Max fairness with average CSI}

Optimizing the performance of communication systems under average CSI information is an important and challenging problem with practical interest. Towards this direction, we propose to optimize the outage probability of all users under knowledge of the distribution and order of channels in NOMA systems. The considered optimization problem is defined as
\begin{subequations}
\label{eq:outageNOMA}
\begin{align}
&\min_{\boldsymbol \beta} {\displaystyle \max_{i\in\mathcal{N}}}\;\; \mathcal{P}_i^{\textrm{NOMA}}(\boldsymbol \beta),  \label{eq:outageNOMAa}\\
&\textrm{subject to constraints \eqref{eq:maxminNOMAb} and \eqref{eq:maxminNOMAc}}, 
\end{align}
\end{subequations}
where $\mathcal{P}_i^{\textrm{NOMA}}(\boldsymbol\beta)$ is given by \eqref{out2}. Naturally, increasing the total power monotonically decreases the optimal outage probability of \eqref{eq:outageNOMA}. Hence, a bisection procedure similar to Algorithm \ref{alg:bisect} can be followed, where each step involves the solution of the following problem
\begin{subequations}
\label{eq:outageNOMA2}
\begin{align}
\min_{\boldsymbol \beta} & {\displaystyle \sum_{i\in\mathcal{N}}} \beta_i,\label{eq:outageNOMA2a}\\
\text{s.t.}\;\; & \mathcal{P}_i^{\textrm{NOMA}}(\boldsymbol \beta)\le t, ~~~\beta_i\ge 0.\label{eq:outageNOMA2c}
\end{align}
\end{subequations}
\vspace{-0.5cm}

\noindent For a particular value $t$, if the solution of \eqref{eq:outageNOMA2} satisfies $\sum_{i\in\mathcal{N}}\beta_i\le 1$ then the optimal solution of \eqref{eq:outageNOMA} is $r^*\le t$, otherwise $r^*> t$. Hence, the optimal solution of \eqref{eq:outageNOMA}, can be obtained by optimally solving a sequence of \eqref{eq:outageNOMA2}. Nonetheless, problem \eqref{eq:outageNOMA2} is non-convex due to the presence of terms $\exp(-\delta_{i,k}\hat{\zeta}_i)$ in \eqref{eq:outageNOMA2a} which are neither convex or concave. To simplify the problem, the next proposition allows the elimination of $\hat{\zeta}_i$. 

\begin{proposition}
\label{prop:zeta}
At the optimal solution of problem \eqref{eq:outageNOMA2}, it is true that $|h_1|^2 = \zeta_1 \le \ldots \le |h_N|^2=\zeta_N$.
\end{proposition}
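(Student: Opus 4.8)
The plan is to establish the claim in two moves: first reduce each outage constraint to a scalar threshold on $\hat{\zeta}_i$, and then show that at the optimum all such constraints are active and that the thresholds are ordered, forcing $\hat{\zeta}_i=\zeta_i$ with $\zeta_1\le\cdots\le\zeta_N$. I would begin by noting that, by \eqref{out2}, $\mathcal{P}_i^{\textrm{NOMA}}(\boldsymbol\beta)$ depends on $\boldsymbol\beta$ only through $\hat{\zeta}_i$ and is a strictly increasing function of it, being the cumulative distribution function of the $i$-th smallest of $N$ i.i.d.\ exponential gains evaluated at $\hat{\zeta}_i$. Hence the constraint $\mathcal{P}_i^{\textrm{NOMA}}(\boldsymbol\beta)\le t$ in \eqref{eq:outageNOMA2c} is equivalent to a scalar bound $\hat{\zeta}_i\le c_i$, where $c_i=c_i(t)$ is the unique root of $\mathcal{P}_i^{\textrm{NOMA}}=t$ obtained by inverting \eqref{out2}.

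The key step, which I expect to be the main obstacle, is to prove that the thresholds $c_i$ are nondecreasing in $i$ for a fixed $t$. This is a statement about the stochastic ordering of order statistics: since the $i$-th smallest gain $|h_i|^2$ never exceeds the $(i{+}1)$-th, the event $\{|h_{i+1}|^2\le x\}$ is contained in $\{|h_i|^2\le x\}$, so $\mathcal{P}_{i+1}^{\textrm{NOMA}}$ evaluated at any common argument is no larger than $\mathcal{P}_i^{\textrm{NOMA}}$, and inverting at level $t$ gives $c_i\le c_{i+1}$. If a self-contained derivation is preferred, the same monotonicity can be read off directly from the closed form \eqref{out2}. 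Granting $c_1\le\cdots\le c_N$, the requirement $\hat{\zeta}_i=\max\{\zeta_1,\ldots,\zeta_i\}\le c_i$ for every $i$ collapses, since $\min_{j\ge m}c_j=c_m$, to the decoupled system $\zeta_m\le c_m$ for all $m\in\mathcal{N}$; this is exactly what eliminates the $\max$ operator from the feasible set.

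It then remains to show every such bound is tight at the optimum. Using $\zeta_i=\sigma_n^2\hat{r}_0/[P(\beta_i-\hat{r}_0\sum_{l>i}\beta_l)]$, I observe that decreasing a single component $\beta_m$ strictly decreases the objective $\sum_i\beta_i$ of \eqref{eq:outageNOMA2a}, strictly increases $\zeta_m$, strictly decreases every $\zeta_j$ with $j<m$, and leaves every $\zeta_j$ with $j>m$ unchanged. Consequently, if $\zeta_m<c_m$ at a candidate optimum one may lower $\beta_m$ until $\zeta_m=c_m$; this keeps all constraints satisfied and preserves $\beta_m>0$ (since at $\zeta_m=c_m$ the denominator equals $\sigma_n^2\hat{r}_0/(Pc_m)>0$), yet strictly lowers the objective, contradicting optimality. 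Therefore $\zeta_m=c_m$ for all $m$, and combining with $c_1\le\cdots\le c_N$ gives $\zeta_1\le\cdots\le\zeta_N$, whence $\hat{\zeta}_i=\zeta_i$ for every $i$, which is the asserted chain.
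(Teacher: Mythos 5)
Your proof is correct, and it reaches the conclusion by a genuinely different organization than the paper, even though both arguments rest on the same two pillars: (i) at the optimum each outage constraint must be active, because lowering $\beta_m$ lowers the objective while only tightening user $m$'s own constraint, and (ii) the almost-sure ordering of the channel order statistics. The paper proves the proposition by a forward induction over users: it takes $\beta_1$ "as small as possible'' (activating user 1's constraint), then, assuming the constraints of users $1,\ldots,k$ are active and ordered, invokes the ordering of the order statistics to conclude $\hat{\zeta}_{k+1}=\zeta_{k+1}$ and drives $\beta_{k+1}$ down until that constraint is active too. As written this is quite loose: it identifies the random variables $|h_i|^2$ with the deterministic thresholds $\zeta_i$ (writing statements like $\mathbb{P}\{|h_1|^2=\zeta_1\}$ that are incomplete as probability statements), and it glosses over the coupling that $\zeta_1$ depends on \emph{all} of $\beta_1,\ldots,\beta_N$ and enters every $\hat{\zeta}_i$ through the max, so the greedy order (weakest user first) is not self-evidently well-posed. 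Your version buys rigor and decoupling precisely where the paper hand-waves: inverting the strictly increasing order-statistic CDF to get thresholds $c_i$, proving $c_1\le\cdots\le c_N$ by event containment, and then observing that the coupled constraints $\hat{\zeta}_i\le c_i$ collapse to the independent bounds $\zeta_m\le c_m$, after which a clean perturbation argument (decreasing $\beta_m$ raises $\zeta_m$, lowers $\zeta_j$ for $j<m$, leaves $j>m$ untouched, and strictly lowers the objective) forces every bound to be active without any induction. It also dovetails with the theorem that follows the proposition, since your $c_i(t)$ are exactly the $\zeta_i^*$ defined by \eqref{eq:optimalZetaNOMA}. What the paper's induction buys in exchange is only brevity and the greedy intuition; your argument is the one I would keep.
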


\begin{proof}
The proposition will be proved by induction. Assume that the optimal solution of \eqref{eq:outageNOMA2} is $\boldsymbol \beta^*$. 
From the definition of NOMA outage probability \eqref{eq:outageDefinition1} for $i=1$, we have that $\mathbb{P}\left\{ |h_1|^2\geq \zeta_1 \right\}$
=$\mathbb{P} \left\{ |h_1|^2 \geq \frac{\sigma_n^2\hat{r}_0}{P(\beta^*_1-\hat{r}_0 \sum_{k=2}^N\beta^*_k)}\right\}$. Since $\beta^*_1$ only affects the outage probability of user 1, we want to select $\beta_1$ as small as possible to minimize the total used power which implies that $\mathbb{P}\left\{|h_1|^2=\zeta_1\right\}$. Assume that $|h_1|^2=\zeta_1,\ldots,|h_k|^2=\zeta_k$; we will prove the result for $k+1$. From \eqref{eq:outageDefinition1} we have that  $\mathbb{P}\{\bigcap_{m=1,\ldots,k+1} |h_{m}|^2\geq \zeta_{m}\}$ =  $\mathbb{P}\left\{|h_{k+1}|^2\geq \zeta_{k+1}\right\}$, as from the above assumption is it true that $|h_{k+1}|^2 \geq \zeta_{m}=|h_{m}|^2$, $m=1,\ldots,k$. Because $\beta_{k+1}$ does not affect the outage probability of users $k+2$ to $N$ it can be selected to minimize the total consumed power which is true when $|h_{k+1}|^2=\zeta_{k+1}$. This completes the proof. 
\end{proof}

\begin{figure*}[t]
  \begin{minipage}{0.31\linewidth}\centering
    \includegraphics[width=\linewidth]{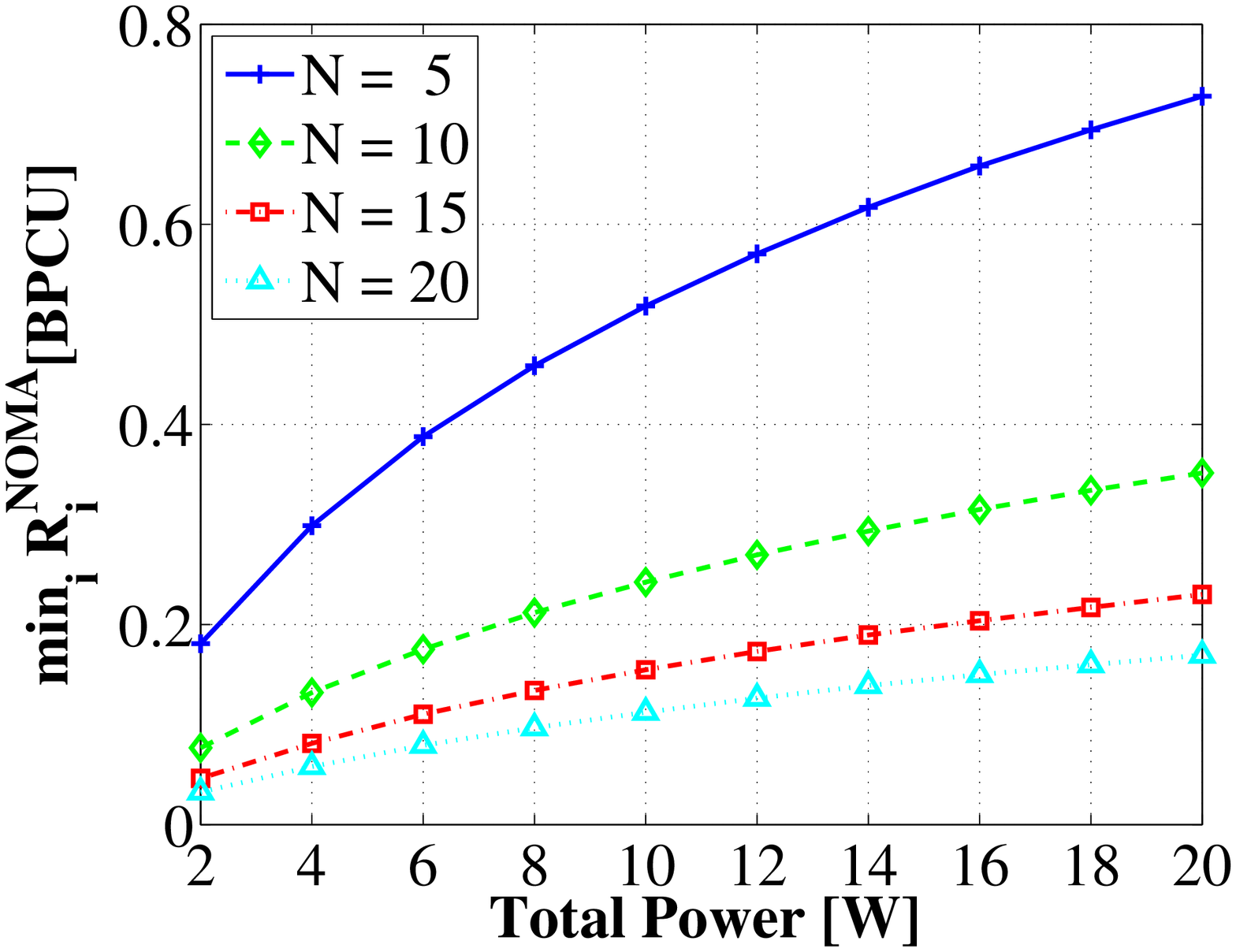}\label{Fig1} 
    \caption{Optimal NOMA fairness rate for different number of users and total power under instantaneous CSI.}
  \end{minipage}
  \begin{minipage}{0.31\linewidth}\centering
    \includegraphics[width=\linewidth]{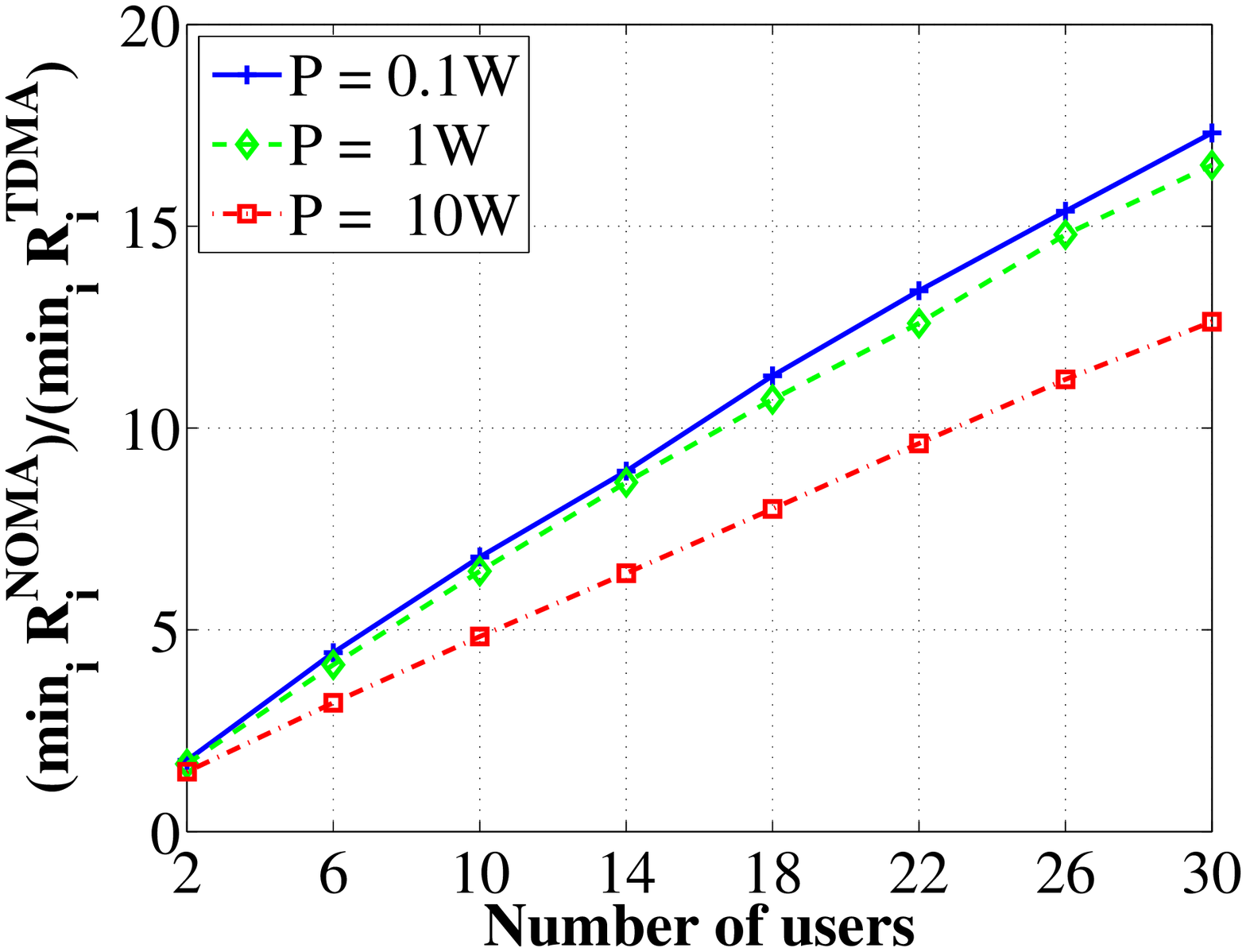}\label{Fig2} 
    \caption{Comparative NOMA and TDMA fairness rate performance under instantaneous CSI for different number of users and total power.}
  \end{minipage}
  \begin{minipage}{0.31\linewidth}\centering
    \includegraphics[width=\linewidth]{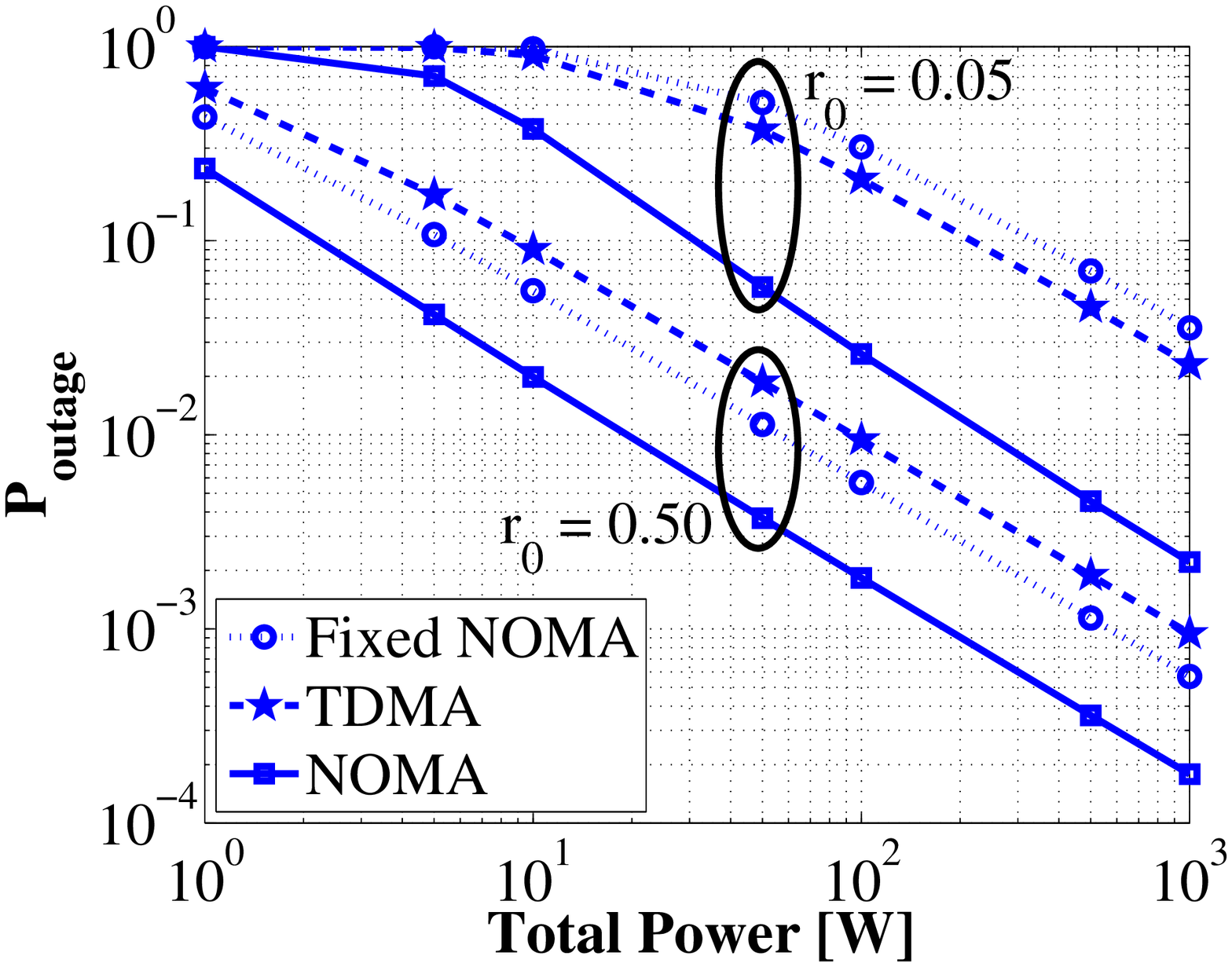}\label{Fig3} 
    \caption{Comparative outage probability performance between NOMA, TDMA and fixed NOMA under average  CSI.}
  \end{minipage}
\end{figure*}

The intuition behind proposition \ref{prop:zeta} is that the target spectral efficiency needs to be satisfied with equality for all users. Proposition \ref{prop:zeta}, implies that at the optimal solution of \eqref{eq:outageNOMA2}, it is true that $\hat{\zeta}_i=\zeta_i$. Next, we show that despite of \eqref{eq:outageNOMA2} being non-convex, its globally optimal solution can be obtained in semi-closed form.

\vspace{-0.2cm}
\begin{theorem}
The optimal solution of \eqref{eq:outageNOMA2} is given by:
\begin{align}
\label{eq:optimalBetaNOMA}
&\beta_i^* = \frac{\sigma_n^2\hat{r}_0}{P\zeta^*_i} + \hat{r}_0 \sum_{k=i+1}^N\beta^*_k, i=N,\ldots,1, 
\end{align}
where $\zeta_i^*$, $i\in\mathcal{N}$ denotes the solution of the 1-D equation
\begin{align}
\label{eq:optimalZetaNOMA}
&\sum_{k=0}^{i-1} \gamma_{i,k}\bigg(1-\exp(-\delta_{i,k}\zeta_i)\bigg) = t,  
\end{align}
\end{theorem}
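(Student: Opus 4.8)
The plan is to exploit the triangular structure of problem \eqref{eq:outageNOMA2} that is exposed by Proposition \ref{prop:zeta}, reducing the non-convex power-minimization to a collection of decoupled one-dimensional monotone equations solved by backward substitution. First I would invoke Proposition \ref{prop:zeta} to replace $\hat{\zeta}_i$ by $\zeta_i$ in the outage expression \eqref{out2}, so that each constraint $\mathcal{P}_i^{\textrm{NOMA}}(\boldsymbol\beta)\le t$ becomes $\sum_{k=0}^{i-1}\gamma_{i,k}(1-\exp(-\delta_{i,k}\zeta_i))\le t$. The crucial observation is that this constraint now involves $\boldsymbol\beta$ only through $\zeta_i=\frac{\sigma_n^2\hat{r}_0}{P(\beta_i-\hat{r}_0\sum_{l=i+1}^N\beta_l)}$, which depends solely on $\beta_i,\beta_{i+1},\ldots,\beta_N$; the constraints are therefore decoupled in a triangular manner.

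Next I would establish that $\mathcal{P}_i^{\textrm{NOMA}}$ is strictly increasing in $\zeta_i$. The cleanest route is the integral form in \eqref{out2}, namely $\mathcal{P}_i^{\textrm{NOMA}}=\Delta_i\int_0^{\zeta_i}(1-\exp(-\lambda x))^{i-1}\exp(-\lambda(N-i+1)x)\,dx$, whose integrand is strictly positive, so the integral increases with its upper limit $\zeta_i$ (this also avoids reasoning about the alternating signs of $\gamma_{i,k}$). Equivalently, it is the CDF of the $i$-th order statistic evaluated at $\zeta_i$. Consequently $\mathcal{P}_i^{\textrm{NOMA}}\le t$ is equivalent to an upper bound on $\zeta_i$.

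I would then show each bound is tight at the optimum by a direct perturbation argument, rather than by the merely necessary KKT conditions of this non-convex problem. Suppose constraint $i$ were slack at the optimal $\boldsymbol\beta^*$. Decreasing $\beta_i$ alone strictly lowers the objective $\sum_j\beta_j$ while keeping feasibility: it increases $\zeta_i$, so constraint $i$ stays satisfied for a sufficiently small perturbation; it leaves $\zeta_k$ for $k>i$ unchanged, since those depend only on $\beta_{k+1},\ldots,\beta_N$; and it decreases $\zeta_j$ for $j<i$, since $\beta_i$ enters their denominators additively, so the lower-indexed constraints only become slacker. This contradicts optimality, so $\mathcal{P}_i^{\textrm{NOMA}}(\boldsymbol\beta^*)=t$ for every $i$, which is precisely \eqref{eq:optimalZetaNOMA}; strict monotonicity of its left-hand side makes the root $\zeta_i^*$ unique. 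Finally, inverting $\zeta_i^*=\frac{\sigma_n^2\hat{r}_0}{P(\beta_i-\hat{r}_0\sum_{l=i+1}^N\beta_l)}$ for $\beta_i$ yields \eqref{eq:optimalBetaNOMA}, and since $\beta_i^*$ depends only on $\zeta_i^*$ and on $\beta_k^*$ with $k>i$, the values are determined unambiguously by the stated backward recursion $i=N,\ldots,1$.

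The main obstacle is the activity argument of the third step. Because \eqref{eq:outageNOMA2} is non-convex one cannot invoke sufficiency of KKT, so I must verify carefully that the single-coordinate perturbation preserves feasibility of every constraint simultaneously, in particular that the coupling through the lower-indexed $\zeta_j$ acts in the favorable direction and that reducing $\beta_i$ does not drive $\beta_i-\hat{r}_0\sum_{l=i+1}^N\beta_l$ to zero before the active bound $\mathcal{P}_i^{\textrm{NOMA}}=t$ is reached (which holds since $t<1$). Once this monotone, triangular structure is pinned down, the semi-closed-form recursion follows immediately.
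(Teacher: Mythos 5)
Your proof is correct, but it takes a genuinely different route from the paper's. The paper pushes the variable transformation $\zeta_i=\sigma_n^2\hat{r}_0/\bigl(P(\beta_i-\hat{r}_0\sum_{l=i+1}^N\beta_l)\bigr)$ through the \emph{objective} as well as the constraints: carrying out the backward substitution shows that the total power equals the separable function $\sum_{i}\sigma_n^2\hat{r}_0(1+\hat{r}_0)^{i-1}/(P\zeta_i)$, so problem \eqref{eq:outageNOMA2} becomes \eqref{eq:problemZeta} and splits into $N$ independent one-dimensional subproblems \eqref{eq:problemZetaSub}, each with an objective decreasing in $\zeta_i$ and a constraint increasing in $\zeta_i$; each subproblem is therefore solved at the largest feasible $\zeta_i$, i.e., with the constraint active. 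You instead stay in the original variables and obtain constraint activity from a coordinate-wise perturbation: if constraint $i$ were slack at an optimum, decreasing $\beta_i$ alone would keep all constraints feasible (those with index $k>i$ are untouched, those with $j<i$ only become slacker) while strictly reducing $\sum_j\beta_j$. The trade-off is as follows. The paper's decomposition costs the algebraic computation of the $(1+\hat{r}_0)^{i-1}$ coefficients, but in exchange it is fully constructive --- existence and global optimality of the resulting point come for free since each separable term is minimized independently --- and it makes the parallel-solution complexity claim immediate. Your argument avoids that computation and is more elementary, but it only establishes that \emph{any} optimum must have all constraints active; to speak of ``the'' optimum you still need a short existence step (e.g., the feasible set intersected with $\{\sum_i\beta_i\le C\}$ for a suitable constant $C$ is compact and nonempty, so Weierstrass applies), after which uniqueness of the roots of \eqref{eq:optimalZetaNOMA} and the backward recursion \eqref{eq:optimalBetaNOMA} pin the optimum down. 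Two details you actually handle more carefully than the paper: you derive monotonicity of $\mathcal{P}_i^{\textrm{NOMA}}$ in $\zeta_i$ from the integral (order-statistic CDF) representation rather than from \eqref{out2}, whose coefficients $\gamma_{i,k}$ alternate in sign, and you note that since $t<1$ the denominator $\beta_i-\hat{r}_0\sum_{l=i+1}^N\beta_l$ cannot hit zero while constraint $i$ is still slack. Both proofs rely equally on Proposition \ref{prop:zeta} to replace $\hat{\zeta}_i$ by $\zeta_i$, so neither is weaker on that front.
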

\begin{proof}
To prove the above theorem, variable transformation $\zeta_i=\frac{\sigma_n^2\hat{r}_0}{P(\beta_i-\hat{r}_0 \sum_{l=i+1}^N\beta_l)}$ is made on problem \eqref{eq:outageNOMA2} yielding
\begin{align}
\min & ~\sum_{i\in\mathcal{N}} \sigma_n^2\hat{r}_0(1+\hat{r}_0)^{i-1}/(P\zeta_i), \label{eq:problemZeta}\\
& \sum_{k=0}^{i-1} \gamma_{i,k}\bigg(1-\exp(-\delta_{i,k}\zeta_i)\bigg)\le t, ~\zeta_i\ge 0, ~i\in\mathcal{N}. \nonumber 
\end{align}

Interestingly, the constraints in problem \eqref{eq:problemZeta}  are uncoupled and the objective is additive; hence, the problem can be decomposed into $N$ one-dimensional problems, where the $i$-th subproblem is of the form
\begin{subequations}
\label{eq:problemZetaSub}
\begin{align}
\min & ~\sigma_n^2\hat{r}_0(1+\hat{r}_0)^{i-1}/(P\zeta_i), \label{eq:problemZetaSuba}\\
& \sum_{k=0}^{i-1} \gamma_{i,k}\bigg(1-\exp(-\delta_{i,k}\zeta_i)\bigg)\le t,\label{eq:problemZetaSubb}\\
& \zeta_i\ge 0. \label{eq:problemZetaSubc}
\end{align}
\end{subequations}
Although, subproblems \eqref{eq:problemZetaSub} are non-convex (due to \eqref{eq:problemZetaSubb}), it can easily be observed that \eqref{eq:problemZetaSuba} is monotonically decreasing in $\zeta_i$, while constraint \eqref{eq:problemZetaSubb} is monotonically increasing in $\zeta_i$. Hence, the optimal solution to \eqref{eq:problemZetaSub} is the largest value of $\zeta_i$ that satisfies \eqref{eq:problemZetaSubb}, which is obtained when the constraint is active. Having obtained the optimal solution of each subproblem $\boldsymbol \zeta^*$, the optimal solution to $\boldsymbol \beta^*$ can be obtained by back-substitution yielding \eqref{eq:optimalBetaNOMA}, completing the proof. 
\end{proof}
The solution of each \eqref{eq:optimalZetaNOMA} can easily be performed using Newton's or bisection method in $O(N log(\epsilon))$, where $\epsilon$ is the required accuracy. Once $\boldsymbol\zeta^*$ has been obtained, $\boldsymbol\beta^*$ is computed from \eqref{eq:optimalBetaNOMA} in $O(N)$, yielding a total computational complexity of $O(N^2log(\epsilon))$ or $O(Nlog(\epsilon))$ if solved in parallel. Note that the methodology proposed to optimally solve \eqref{eq:outageNOMA2}, can be applied to any underlying outage probability distribution.

\section{Numerical Results}

We evaluate the performance of the developed algorithms by solving $1000$ randomly generated problems for different parameter configurations. All problem instances follow the system model introduced in Section \ref{sec:model} with $\sigma_h^2=1$ and $\sigma_n^2=1$. The fairness performance of the conventional TDMA scheme is used as a benchmark since it refers to the orthogonal allocation of the available degrees of freedom and is equivalent to any orthogonal multiple access scheme (\cite{tse}, Sec. 6.1.3); for the instantaneous CSI case, the optimal TDMA allocation can be solved by using a methodology similar to Algorithm \ref{alg:bisect}, while the optimal outage probability is obtained for equal time-sharing and power-split.

Fig. 1 demonstrates the achievable maximum fairness rate for different $P$ and $N$. As expected, increasing $P$ or reducing $N$ improves the achievable fairness rate. Interestingly, the performance gain from $N=10$ to $N=5$ is significantly higher than from $N=20$ to $N=10$. Notice also that as $P$ increases the rate of improvement reduces because the fairness data rate is a logarithmic function of power. 

Fig. 2 provides a comparison between NOMA and TDMA by depicting the achievable fairness rate for different $P$ for varying number of users. Clearly, NOMA is significantly better than TDMA. In fact, as the number of users increases, the advantage of NOMA over TDMA increases almost linearly. NOMA is also substantially better than TDMA in terms of computational complexity, as TDMA requires the solution of a sequence of convex programs. 

Figs. 3 compares the outage probability between NOMA and TDMA under average CSI information as a function of the total transmitted power for target spectral efficiency $r_0=0.05$ BPCU and $r_0=0.50$ BPCU, for $N=5$. In TDMA the optimal outage probability is obtained for equal time-split ratio and power among users. Results are also shown for a fixed NOMA PA scheme proposed in \cite{DIN}, with $\beta_m = \frac{N - m + 1}{\mu}$, $m\in\mathcal{N}$, where $\mu$ is selected such that $\sum_{m\in\mathcal{N}} \beta_m = 1$. It can be easily observed that in all cases considered, NOMA outperforms TDMA by an order of magnitude. NOMA also has at least five times better performance compared to the fixed NOMA PA scheme. As expected, higher target spectral efficiency results in worse outage probability because it is more difficult to be satisfied. For validation purposes, outage probabilities were obtained numerically for $10^8$ problems instances illustrating maximum absolute relative error 0.7\% for all cases considered.

\vspace{-0.2cm}
\section{Conclusions}
In this paper, the problem of optimal PA to maximize fairness among users of a NOMA downlink system is investigated in terms of data-rate under full CSI and outage probability under average CSI. Although the resulting problems are non-convex, simple low-complexity algorithms are developed that provide the optimal solution. Simulation results demonstrate the efficiency of NOMA, achieving fairness performance that is approximately an order of magnitude better than TDMA in the considered configurations. The main results of this work show that NOMA can ensure high fairness requirements through appropriate PA and is a promising MA scheme for future 5G communication systems. 

\vspace{-0.1cm}

\end{document}